\newtheorem{theorem}{Theorem}
\newtheorem{lemma}[theorem]{Lemma}
\newtheorem{corollary}[theorem]{Corollary}
\newtheorem{defn}[theorem]{Definition}
\def\squareforqed{\hbox{\rule{2.5mm}{2.5mm}}}
\def\QED{\ifmmode\squareforqed 
  \else{\nobreak\hfil   
    \penalty50                 
    \hskip1em                  
    \null                      
    \nobreak                   
    \hfil                      
    \squareforqed              
    \parfillskip=0pt           
    \finalhyphendemerits=0     
    \endgraf}                  
  \fi}
\def\blksquare{\rule{2mm}{2mm}}
\def\qedsymbol{\blksquare}
\newcommand{\bg}[1]{\medskip\noindent{\bf #1}}
\newcommand{\ed}{{\hfill\qedsymbol}\medskip}
\newenvironment{remark}{\bg{Remark. }}{\ed}
\newcommand{\AG}{\ensuremath{\texttt{AG}}}
\newcommand{\comment}[1]{}
 {}
\newenvironment{rtheorem}[3][]{
\bigskip
\noindent \ifthenelse{\equal{#1}{}}{\bf #2 #3}{\bf #2 #3 (#1)}
\begin{it}
}{\end{it}}
\newcommand{\E}{\mathbb{E}}
\newcommand{\rev}{R}
\newcommand{\F}{\ensuremath{\mathcal F}}
\newcommand{\T}{\ensuremath{\mathcal T}}
\newcommand{\junk}[1]{}
\newlength{\tmp} \newlength{\lpsx} \newlength{\lpsy} \newlength{\upsx} \newlength{\upsy}
\newcommand{\poa}{\text{\textsc{PoA}} }
\newcommand{\opt}{\text{\textsc{Opt}} }
\newcommand{\pota}{\text{\textsc{PoTA}} }
\newcommand{\M}{\ensuremath{\mathcal M}}
\newcommand{\X}{\ensuremath{\mathcal X}}
\newcommand{\V}{\ensuremath{\mathcal V}}
\newcommand{\val}{\ensuremath{v}}
\newcommand{\vr}{\ensuremath{\mathbf{v}}}
\newcommand{\wb}{\ensuremath{\mathbf{w}}}
\newcommand{\A}{\ensuremath{\mathcal A}}
\newcommand{\al}{\ensuremath{\mathbf a}}
\newcommand{\sr}{\ensuremath{\mathbf s}}
\newcommand{\CCE}{\text{\textsc{CCE}}}
\newcommand{\BCCE}{\text{\textsc{Bayes-CCE}}}
\newcommand{\ex}{\texttt{AG}}
\newcommand{\specialcell}[2][c]{%
  \begin{tabular}[#1]{@{}c@{}}#2\end{tabular}}
\begin{document}

\title{No-Regret Learning in Bayesian Games}

\author{Jason Hartline\\Northwestern University\\Evanston, IL\\{\tt hartline@northwestern.edu}
\And
Vasilis Syrgkanis\\Microsoft Research\\New York, NY\\{\tt vasy@microsoft.com}
\And
\'{E}va Tardos\\Cornell University\\Ithaca, NY\\{\tt eva@cs.cornell.edu}
}
\date{}

\maketitle

\begin{abstract}
Recent price-of-anarchy analyses of games of complete information suggest that coarse correlated equilibria, which characterize outcomes resulting from no-regret learning dynamics, have near-optimal welfare. This work provides two main technical results that lift this conclusion to games of incomplete information, a.k.a., Bayesian games. First, near-optimal welfare in Bayesian games follows directly from the smoothness-based proof of near-optimal welfare in the same game when the private information is public.  Second, no-regret learning dynamics converge to Bayesian coarse correlated equilibrium in these incomplete information games. These results are enabled by interpretation of a Bayesian game as a stochastic game of complete information.

\end{abstract}
\section{Introduction}

A recent confluence of results from game theory and learning theory
gives a simple explanation for why good outcomes in large families of
strategically-complex games can be expected.  The advance comes from
(a) a relaxation the classical notion of equilibrium in games to one
that corresponds to the outcome attained when players' behavior
ensures asymptotic {\em no-regret}, e.g., via standard online learning
algorithms such as {\em weighted majority}, and (b) an extension
theorem that shows that the standard approach for bounding the quality
of classical equilibria automatically implies the same bounds on the
quality of no-regret equilibria.  This paper generalizes these results
from static games to Bayesian games, for example, auctions.

Our motivation for considering learning outcomes in Bayesian games is
the following.  Many important games model
repeated interactions between an uncertain set of
participants. Sponsored search, and more generally, online ad-auction
market places, are important examples of such games. Platforms are
running millions of auctions, with each individual auction slightly
different and of only very small value, but such market places have
high enough volume to be the financial basis of large industries. This
online auction environment is best modeled by a repeated Bayesian
game: the auction game is repeated over time, with the set of
participants slightly different each time, depending on many factors
from budgets of the players to subtle differences in the
opportunities.

A canonical example to which our methods apply is a single-item
first-price auction with players' values for the item drawn from a
product distribution. In such an auction, players simultaneously
submit sealed bids and the player with the highest bid wins and pays
her bid.  The utility of the winner is her value minus her bid; the
utilities of the losers are zero.  When the values are drawn from
non-identical continuous distributions the Bayes-Nash equilibrium is
given by a differential equation that is not generally analytically
tractable, cf.~\cite{Kaplan2012} (and generalizations of this model,
computationally hard, see \cite{Cai2014}).
Again, though their Bayes-Nash equilibria are complex, we show that
good outcomes can be expected in these kinds of auctions.

Our approach to proving that good equilibria can be expected in
repeated Bayesian games is to extend an analogous result for static
games,\footnote{In the standard terms of the game theory literature,
we extend results for learning in games of complete information to
games of incomplete information.} i.e., the setting where the same
game with the same payoffs and the same players is repeated.  Nash
equilibrium is the classical model of equilibrium for each stage of
the static game.  In such an equilibrium the strategies of players may
be randomized; however, the randomizations of the players are
independent.  To measure the quality of outcomes in games
Koutsoupias and Papadimitriou~\cite{Koutsoupias1999} introduced the \emph{price of anarchy}, the 
ratio of the quality of the worst Nash equilibrium over a socially
optimal solution.  Price of anarchy results have been shown for
large families of games, with a focus on those relevant for computer
networks.  Roughgarden~\cite{Roughgarden2009} identified the canonical approach
for bounding the price of anarchy of a game as showing that it satisfies a
natural {\em smoothness} condition.

There are two fundamental flaws with Nash equilibrium as a description
of strategic behavior.  First, computing a Nash equilibrium can be
PPAD hard and, thus, neither should efficient algorithms for computing
a Nash equilibrium be expected nor should any dynamics (of players
with bounded computational capabilities) converge to a Nash
equilibrium.  Second, natural behavior tends to introduce correlations
in strategies and therefore does not converge to Nash equilibrium even
in the limit.  Both of these issues can be resolved for large families
of games.  First, there are relaxations of Nash equilibrium which
allow for correlation in the players' strategies.  Of these, this
paper will focus on {\em coarse correlated equilibrium} which requires
the expected payoff of a player for the correlated strategy be no
worse than the expected payoff of any action at the player's disposal.
Second, it was proven by Blum et al.~\cite{Blum2008} that the
(asymptotic) no-regret property of many online learning algorithms
implies convergence to the set of coarse correlated
equilibria.\footnote{This result is a generalization of one of Foster
and Vohra~\cite{Foster1998}.}

Blum et al.~\cite{Blum2008} extended the definition of the price of anarchy 
to outcomes obtained when each player follows a no-regret learning
algorithm.\footnote{They referred to this price of anarchy for
no-regret learners 
as
the {\em price of total anarchy}.}  As coarse correlated equilibrium
generalize Nash equilibrium it could be that the worst case
equilibrium under the former is worse than the
latter.  Roughgarden~\cite{Roughgarden2009}, however, observed that there is
often no degradation; specifically, the very same smoothness
property that he identified as implying good welfare in Nash equilibrium
also proves good welfare of coarse correlated equilibrium
(equivalently: for outcomes from no-regret learners).  Thus, for a
large family of static games, we can expect strategic behavior to lead
to good outcomes.

\vsedit{This paper extends this theory to Bayesian games. Our
contribution is two-fold: (i) We show an analog of the convergence
of no-regret learning to coarse correlated equilibria in Bayesian
games, which is of interest independently of our price of anarchy
analysis; and (ii) we show that the coarse correlated equilibria of the Bayesian 
version of any smooth static game have good welfare.
Combining these results, we conclude that no-regret learning in smooth
Bayesian games achieves good welfare.}

These results are obtained as follows.  It is possible to view a
Bayesian game as a stochastic game, i.e., where the payoff structure
is fixed but there is a random action on the part of Nature.  This
viewpoint applied to the above auction example considers a population
of bidders associated for each player and, in each stage, Nature
uniformly at random selects one bidder from each population to
participate in the auction.  We re-interpret and strengthen a result
of Syrgkanis and Tardos~\cite{Syrgkanis2013} by showing that the
smoothness property of the static game (for any fixed profile of
bidder values) implies smoothness of this stochastic game.  From the
perspective of coarse correlated equilibrium, there is no difference
between a stochastic game and the non-stochastic game with each random
variable replaced with its expected value.  Thus, the smoothness
framework of Roughgarden~\cite{Roughgarden2009} extends this result to
imply that the coarse correlated equilibria of the stochastic game are
good.  To show that we can expect good outcomes in Bayesian games, it
suffices to show that no-regret learning converges to the coarse
correlated equilibrium of this stochastic game.  Importantly, when we
consider learning algorithms there is a distinction between the
stochastic game where players' payoffs are random variables and the
non-stochastic game where players' payoffs are the expectation of
these variables.  Our analysis addressed this distinction and, in
particular, shows that, in the stochastic game on populations,
no-regret learning converges almost surely to the set of coarse
correlated equilibrium. \vsedit{This result implies that the average
welfare of no-regret dynamics will be good, almost surely, and not
only in expectation over the random draws of Nature.} 

\section{Preliminaries}


This section describes a general game theoretic environment which
includes auctions and resource allocation mechanisms.  For this
general environment we review the results from the literature for
analyzing the social welfare that arises from no-regret learning
dynamics in repeated game play.  The subsequent sections of the paper
will generalize this model and these results to Bayesian games,
a.k.a., games of incomplete information.

\paragraph{General Game Form.} 
A general game $\M$ is specified by a mapping from a profile
$a\in \A\equiv\A_1\times\cdots\times\A_n$ of allowable actions of
players to an outcome.  Behavior in a game may result in (possibly
correlated) randomized actions
$\al \in \Delta(\A)$.\footnote{Bold-face symbols denote random
variables.}  Player $i$'s utility in this game is determined by a
profile of individual values
$\val \in \V\equiv\V_1\times\cdots\times \V_n$ and the (implicit)
outcome of the game; it is denoted $U_i(\al; v_i)
= \E_{a \sim \al}\left[U_i(a; v_i)\right]$. In games with a
social planner or principal who does not take an action in the game,
the utility of the principal is $\rev(\al)
=\E_{a \sim \al}\left[\rev(a)\right]$.  In many games of
interest, such as auctions or allocation mechanisms, the utility of
the principal is the revenue from payments from the players.  We will
use the term {\em mechanism} and {\em game} interchangeably.

In a {\em static game} the payoffs of the players (given by $\val$) are
fixed.  Subsequent sections will consider {\em Bayesian games} in the
independent private value model, i.e., where player $i$'s value $v_i$
is drawn independently from the other players' values and is known
only privately to player $i$.  Classical game theory assumes {\em
complete information} for static games, i.e., that $\val$ is known, and
{\em incomplete information} in Bayesian games, i.e., that the
distribution over $\V$ is known.  For our study of learning in games
no assumptions of knowledge are made; however, to connect to the
classical literature we will use its terminology of complete and
incomplete information to refer to static and Bayesian games,
respectively.

\paragraph{Social Welfare.} 
We will be interested in analyzing the quality of the outcome of the
game as defined by the social welfare, which is the sum of the
utilities of the players and the principal.  We will denote by
$SW(\al;\val)=\sum_{i\in [n]}U_i(\al; v_i)+\rev(\al)$
the expected social welfare of mechanism $\M$ under a randomized
action profile $\al$.  For any valuation profile $\val \in \V$ we will
denote the optimal social welfare, i.e, the maximum over outcomes of the
game of the sum of utilities, by $\opt(\val)$.

\paragraph{No-regret Learning and Coarse Correlated Equilibria.}
For complete information games, i.e., fixed valuation profile $\val$,
Blum et al.~\cite{Blum2008} analyzed repeated play of players using
no-regret learning algorithms, and showed that this play converges to
a relaxation of Nash equilibrium, namely, coarse correlated equilibrium.
\begin{defn}[no regret]
A player achieves {\em no regret} in a sequence of play $a^1,\ldots,a^T$ if his regret
against any fixed strategy $a_i'$ vanishes to zero:
\begin{equation}
\textstyle{\lim_{T\rightarrow \infty}\frac{1}{T}\sum_{t=1}^{T} (U_i(a_i',a_{-i}^t; v_i) -  U_i(a^t;v_i))  = 0.}
\end{equation}
\end{defn}
\begin{defn}[coarse correlated equilibrium, $\CCE$]
A randomized action profile $\al\in \Delta(\A)$ is a {\em coarse
correlated equilibrium} of a complete information game with valuation
profile $\val$ if for every player $i$ and $a_i'\in \A_i$:
\begin{equation}
\E_{\al}\left[U_i(\al; v_i)\right] \geq \E_{\al}\left[U_i(a_i',\al_{-i}; v_i)\right]
\end{equation}
\end{defn}

\begin{theorem}[Blum et al.~\cite{Blum2008}]
\label{thm:no-regret=>CCE} The empirical distribution of actions of any no-regret sequence in a repeated game converges to the set of $\CCE$ of the static game.
\end{theorem}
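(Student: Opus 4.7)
The plan is to translate the \CCE{} inequality, applied to the empirical distribution of play, directly into a statement about time-averaged regret, and then invoke the no-regret hypothesis one deviation at a time. Concretely, given the play sequence $a^1,\ldots,a^T$ let $\bar{\al}^T\in\Delta(\A)$ denote the empirical distribution, i.e.\ the uniform measure on $\{a^t\}_{t=1}^T$. For any function $f$ on $\A$, $\E_{a\sim\bar{\al}^T}[f(a)]=\frac{1}{T}\sum_{t=1}^T f(a^t)$; in particular, for each player $i$ and each deviation $a_i'\in\A_i$,
\[
\E_{\bar{\al}^T}[U_i(\al;v_i)] - \E_{\bar{\al}^T}[U_i(a_i',\al_{-i};v_i)] \;=\; -\,\frac{1}{T}\sum_{t=1}^T\bigl(U_i(a_i',a_{-i}^t;v_i)-U_i(a^t;v_i)\bigr).
\]

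The key steps are then as follows. First, fix an arbitrary player $i$ and deviation $a_i'$. The no-regret hypothesis states that the right-hand side above tends to zero as $T\to\infty$; hence the \CCE{} inequality holds at $\bar{\al}^T$ up to an additive slack $\epsilon_i^T(a_i')\to 0$. Second, take the worst deviation: letting $\epsilon^T:=\max_i\max_{a_i'\in\A_i}|\epsilon_i^T(a_i')|$, one has $\epsilon^T\to 0$ (the max is over a finite set in the finite-action case), so $\bar{\al}^T$ is an $\epsilon^T$-approximate \CCE{}. Third, promote this to the desired set-theoretic convergence: since the \CCE{} set is the intersection of finitely many closed half-spaces in the compact space $\Delta(\A)$, it is closed; the distance from $\bar{\al}^T$ to $\CCE$ is controlled by the maximum violation $\epsilon^T$, so $d(\bar{\al}^T,\CCE)\to 0$.

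I expect the main obstacle to be only bookkeeping rather than a deep difficulty: one must be careful about what ``converges to the set'' means when the action sets are not finite, since then the quantification ``for all $a_i'$'' in the regret condition must be made uniform (or one must work with a suitable topology on $\Delta(\A)$ and continuity of $U_i(\cdot;v_i)$). In the finite-action setting assumed here the argument is essentially a relabeling: the no-regret definition \emph{is} the $\CCE$ inequality averaged over time, so the theorem reduces to this observation plus closedness of $\CCE$. No smoothness or equilibrium existence machinery is needed; Theorem~\ref{thm:no-regret=>CCE} is a direct consequence of the two definitions.
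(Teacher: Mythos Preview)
Your argument is correct and is the standard proof of this folklore result. Note, however, that the paper does not actually supply its own proof of Theorem~\ref{thm:no-regret=>CCE}: it is stated in the Preliminaries as a known result due to Blum et al., so there is nothing to compare against. Your identification of the empirical average with the expectation under $\bar{\al}^T$, followed by the observation that the no-regret condition is literally the \CCE{} inequality up to vanishing slack, is exactly the usual proof.

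One small remark on the final step: the claim that ``the distance from $\bar{\al}^T$ to $\CCE$ is controlled by the maximum violation $\epsilon^T$'' deserves a word of justification. A clean way to finish is by compactness of $\Delta(\A)$: every limit point of $\{\bar{\al}^T\}$ satisfies all the (continuous, finitely many) \CCE{} constraints exactly, hence lies in $\CCE$, and therefore $d(\bar{\al}^T,\CCE)\to 0$. Alternatively, since $\CCE$ is a polyhedron, Hoffman's lemma gives the quantitative bound you allude to. Either route closes the argument; as written the step is asserted rather than argued, but it is not wrong.
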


\paragraph{Price of Anarchy of CCE.}
Roughgarden~\cite{Roughgarden2009} gave a unifying framework for
comparing the social welfare, under various equilibrium notions
including coarse correlated equilibrium, to the optimal social welfare
by defining the notion of a smooth game.  This framework was extended
to games like auctions and allocation mechanisms by
Syrgkanis and Tardos~\cite{Syrgkanis2013}.

\begin{defn}[smooth mechanism]\label{def:smooth-mech}
A mechanism $\M$ is {\em $(\lambda,\mu)$-smooth} for some
$\lambda,\mu\geq 0$ there exists an independent randomized action
profile $\al^*(v) \in \Delta(\A_1)\times\cdots\times\Delta(\A_n)$ for
each valuation profile $v$, such that for any action profile
$a\in \A$ and valuation profile $v \in \V$:
\begin{equation}
\textstyle{\sum_{i\in [n]} U_i(\al_i^*(v),a_{-i}; v_i)\geq \lambda\cdot \opt(v)- \mu\cdot \rev(a)}.
\end{equation}
\end{defn}

Many important games and mechanisms satisfy this smoothness definition
for various parameters of $\lambda$ and $\mu$ (see
Figure~\ref{table:applications}); the following theorem shows that the
welfare of any coarse correlated equilibrium in any of these games is
nearly optimal.

\begin{theorem}[efficiency of CCE; \cite{Syrgkanis2013}]
\label{thm:smooth=>PoA-CCE}
If a mechanism is $(\lambda,\mu)$-smooth then the social welfare of
any course correlated equilibrium at least
$\frac{\lambda}{\max\{1,\mu\}}$ of the optimal welfare, i.e., the
{\em price of anarchy} satisfies $\poa\leq \frac{\max\{1,\mu\}}{\lambda}$.
\end{theorem}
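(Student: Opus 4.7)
The plan is to follow the standard smoothness-to-price-of-anarchy argument: start from the $\CCE$ inequality, plug in the distinguished deviation guaranteed by smoothness, sum over players, then rearrange. Let $\al$ be any $\CCE$ of the complete information game with valuation profile $v$. By $(\lambda,\mu)$-smoothness, for each $v$ there exists an independent randomized profile $\al^*(v)$ satisfying the smoothness inequality. Because $\al^*(v)$ is independent across players, each $\al_i^*(v)$ is a valid individual deviation for player $i$, so the $\CCE$ condition gives
\begin{equation}
\E_{\al}[U_i(\al; v_i)] \;\geq\; \E_{\al}[U_i(\al_i^*(v), \al_{-i}; v_i)]
\end{equation}
for every $i$ (after taking an additional expectation over the independent draw of $\al_i^*(v)$, which is allowed since the inequality holds pointwise in $a_i'$).

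Next, I would sum this over $i \in [n]$ and apply the smoothness inequality inside the expectation on the right-hand side, using that $\al^*(v)$ is independent of $\al$, to obtain
\begin{equation}
\textstyle{\sum_{i \in [n]} \E_{\al}[U_i(\al; v_i)] \;\geq\; \E_{\al}\!\left[\lambda \cdot \opt(v) - \mu \cdot \rev(\al)\right] \;=\; \lambda \cdot \opt(v) - \mu \cdot \E_{\al}[\rev(\al)].}
\end{equation}
Adding $\E_{\al}[\rev(\al)]$ to both sides and using the definition $SW(\al; v) = \sum_i \E_{\al}[U_i(\al; v_i)] + \E_{\al}[\rev(\al)]$ yields
\begin{equation}
SW(\al; v) \;\geq\; \lambda \cdot \opt(v) + (1-\mu)\cdot \E_{\al}[\rev(\al)].
\end{equation}

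Finally I would split into two cases on $\mu$. If $\mu \leq 1$, the coefficient $(1-\mu)$ is nonnegative and $\rev(\al) \geq 0$, so $SW(\al; v) \geq \lambda \cdot \opt(v)$. If $\mu > 1$, I would use that revenue is bounded by the total welfare, $\E_{\al}[\rev(\al)] \leq SW(\al; v)$ (since player utilities are nonnegative), to deduce $SW(\al; v) \geq \lambda \cdot \opt(v) + (1-\mu)\cdot SW(\al; v)$, i.e., $\mu \cdot SW(\al; v) \geq \lambda \cdot \opt(v)$. Combining the two cases gives $SW(\al; v) \geq \frac{\lambda}{\max\{1,\mu\}}\cdot \opt(v)$, which is the claimed bound. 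The only subtle point is handling the $\mu > 1$ regime, where one must use nonnegativity of utilities to bound revenue by welfare; the rest is a mechanical application of the $\CCE$ definition followed by smoothness.
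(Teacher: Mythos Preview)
The paper does not actually prove this theorem; it is stated in the preliminaries as a known result from \cite{Syrgkanis2013} and simply cited. So there is no ``paper's own proof'' to compare against here.

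Your argument is the standard smoothness-to-$\CCE$ derivation and is essentially correct. The one imprecision worth tightening is in the $\mu>1$ case: you write that $\E_{\al}[\rev(\al)]\leq SW(\al;v)$ ``since player utilities are nonnegative.'' Player utilities are not nonnegative at arbitrary action profiles; what is true (and what is needed) is that each player's \emph{expected} utility at the $\CCE$ is nonnegative. This follows from the $\CCE$ condition itself, provided each player has a ``drop-out'' action yielding utility zero --- then deviating to that action shows $\E_{\al}[U_i(\al;v_i)]\geq 0$. This is precisely the justification the paper invokes later in the proof of Theorem~\ref{thm:finite_conv} (``each player in each population can always drop out of the auction''). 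With that clarification, your proof goes through as written.
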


\paragraph{Price of Anarchy of No-regret Learning.}

Following Blum et al.~\cite{Blum2008}, Theorem~\ref{thm:no-regret=>CCE} and
Theorem~\ref{thm:smooth=>PoA-CCE} imply that no-regret learning
dynamics have near-optimal social welfare.

\begin{corollary}[efficiency of no-regret dyhamics; \cite{Syrgkanis2013}]\label{cor:smooth}
If a mechanism is $(\lambda,\mu)$-smooth then the average welfare of
any no-regret dynamics of the repeated game with a fixed player set
and valuation profile, achieves average social welfare at least
$\frac{\lambda}{\max\{1,\mu\}}$ of the optimal welfare, i.e., the
price of anarchy satisfies
$\textstyle{\poa \leq \frac{\max\{1,\mu\}}{\lambda}}$.
\end{corollary}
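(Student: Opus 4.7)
The plan is to chain the two preceding results directly, then address the subtlety that the no-regret guarantee is asymptotic. First I would fix a no-regret sequence of play $a^1, \ldots, a^T$ for a static game with valuation profile $v$, and consider the average social welfare $\frac{1}{T}\sum_t SW(a^t; v) = \sum_i \frac{1}{T}\sum_t U_i(a^t; v_i) + \frac{1}{T}\sum_t \rev(a^t)$.

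The cleanest route is to apply smoothness pointwise in $t$ rather than invoking Theorem~\ref{thm:no-regret=>CCE} as a black box. Specifically, for the deviating action $a_i^*(v)$ guaranteed by Definition~\ref{def:smooth-mech}, the no-regret property of each player gives $\frac{1}{T}\sum_t U_i(a^t; v_i) \geq \frac{1}{T}\sum_t U_i(a_i^*(v), a_{-i}^t; v_i) - o(1)$. Summing over $i$ and applying the $(\lambda,\mu)$-smoothness inequality at each time step $t$ yields
\begin{equation}
\sum_i \tfrac{1}{T}\sum_t U_i(a^t; v_i) \;\geq\; \lambda \cdot \opt(v) - \mu \cdot \tfrac{1}{T}\sum_t \rev(a^t) - o(1).
\end{equation}
Adding the average revenue $\frac{1}{T}\sum_t \rev(a^t)$ to both sides produces $\frac{1}{T}\sum_t SW(a^t; v) \geq \lambda \cdot \opt(v) - (\mu - 1) \cdot \frac{1}{T}\sum_t \rev(a^t) - o(1)$.

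The final step is a case split on $\mu$ exactly as in the CCE analysis underlying Theorem~\ref{thm:smooth=>PoA-CCE}. If $\mu \leq 1$ then $-(\mu-1) \geq 0$ and nonnegativity of revenue gives $\frac{1}{T}\sum_t SW(a^t; v) \geq \lambda \cdot \opt(v) - o(1)$. If $\mu > 1$, I would use $\rev(a^t) \leq SW(a^t; v)$ (revenue is one summand of welfare and utilities are nonnegative), absorb the revenue term to the left, and divide by $\mu$ to obtain $\frac{1}{T}\sum_t SW(a^t; v) \geq \frac{\lambda}{\mu} \opt(v) - o(1)$. Both cases give the claimed bound of $\frac{\lambda}{\max\{1,\mu\}}$.

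The argument is essentially bookkeeping once Theorem~\ref{thm:smooth=>PoA-CCE} is in hand; the only mild obstacle is deciding whether to go through $\CCE$ via Theorem~\ref{thm:no-regret=>CCE} (which requires passing to the limit of the empirical distribution and justifying continuity of the welfare functional on $\Delta(\A)$) or to apply smoothness trajectory-wise as above. I prefer the latter since it avoids any topological assumption on $\A$, makes the $o(1)$ error term explicit in the regret rates, and directly exhibits the bound for every finite horizon $T$, which matches the "average welfare" phrasing of the statement.
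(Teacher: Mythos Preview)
Your argument is correct and follows the standard direct route; the paper itself gives no proof beyond the one-line remark that the corollary follows from composing Theorem~\ref{thm:no-regret=>CCE} with Theorem~\ref{thm:smooth=>PoA-CCE}. Your trajectory-wise application of smoothness is precisely what one obtains by unrolling the proof of Theorem~\ref{thm:smooth=>PoA-CCE} on the empirical distribution of play, so the two routes are equivalent in content; your version has the advantage of making the finite-$T$ error term explicit, as you note.

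One small imprecision: in the $\mu>1$ case you invoke $\rev(a^t)\leq SW(a^t;v)$ pointwise in $t$ by asserting that utilities are nonnegative, but stage-$t$ utilities need not be nonnegative along an arbitrary no-regret trajectory. The correct version (which the paper uses explicitly in the appendix proof of Theorem~\ref{thm:finite_conv}) is that each player's \emph{average} utility is at least $-o(1)$, by no-regret against an opt-out action yielding zero utility; hence $\tfrac{1}{T}\sum_t \rev(a^t)\leq \tfrac{1}{T}\sum_t SW(a^t;v)+o(1)$, which is all you need after absorbing into the existing $o(1)$ term.
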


Importantly, Corollary~\ref{cor:smooth} holds the valuation profile
$\val \in \V$ fixed throughout the repeated game play.  The main
contribution of this paper is in extending this theory to games of
incomplete information, e.g., where the values of the players are
drawn at random in each round of game play.

\begin{figure}
\begin{center}
\begin{tabular}{|c|c|c|c|c|}
        \hline\noalign{\smallskip}
Game/Mechanism  & $(\lambda,\mu)$ & $\poa$ & Reference\\
\hline\hline
\specialcell{Simultaneous First Price Auction with Submodular Bidders} & $(1-1/e,1)$   & $\frac{e}{e-1}$ & \cite{Syrgkanis2013} \\ 
\hline
First Price Multi-Unit Auction &  $(1-1/e,1)$  & $\frac{e}{e-1}$ & \cite{Markakis2013} \\ 
\hline
First Price Position Auction & $(1/2,1)$ & $2$ & \cite{Syrgkanis2013} \\
\hline
All-Pay Auction & $(1/2,1)$ & $2$ & \cite{Syrgkanis2013} \\
\hline
\specialcell{Greedy Combinatorial Auction
with $d$-complements} & $(1-1/e,d)$  & $\frac{d e}{e-1}$ & \cite{Lucier2010} \\
\hline
\specialcell{Proportional Bandwitdth Allocation Mechanism} & $(1/4,1)$  & $4$  & \cite{Syrgkanis2013}\\
\hline
Submodular Welfare Games & $(1,1)$ 
& $2$ & \specialcell{\cite{Vetta2002,Roughgarden2009}} \\
\hline
Congestion Games with Linear Delays & $(5/3,1/3)$ 
   & $5/2$ & \cite{Roughgarden2009}\\
\hline
\end{tabular}
\caption{Examples of smooth games and mechanisms}\label{table:applications} 
\end{center}
\end{figure}

\section{Population Interpretation of Bayesian Games}\label{sec:bce}

In the standard \emph{independent private value model} of a {\em
Bayesian game} there are $n$ players.  Player $i$ has type $\vr_i$
drawn uniformly from the set of type $\V_i$ (and this distribution is
denoted $\F_i$).\footnote{The restriction to the uniform distribution
is without loss of generality for any finite type space and for any
distribution over the type space that involves only rational
probabilities.}  We will restrict attention to the case when the type space $\V_i$ is finite. 
A player's strategy in this Bayesian game is a
mapping $s_i: {\V_i} \to \A_i$ from a valuation $v_i\in \V_i$ to an action
$a_i\in \A_i$.  We will denote with $\Sigma_i=\A_i^{\V_i}$ the
strategy space of each player and with
$\Sigma=\Sigma_1\times\ldots\times \Sigma_n$.  In the game, each
player $i$ realizes his type $v_i$ from the distribution and then
makes action $s_i(v_i)$ in the game.  

In the population interpretation of the Bayesian game, also called
the \emph{agent normal form representation} \cite{Forges1993}, there
are $n$ finite populations of players.  Each player in
population $i$ has a type $v_i$ which we assume to be distinct for
each player in each population and across populations.\footnote{The
restriction to distinct types is without of loss of generality as we
can always augment a type space with an index that does not affect
player utilities.}  The set of players in the population is denoted
$\V_i$.  and the player in population $i$ with type $v_i$ is called
player $v_i$.  In the population game, each player $v_i$ chooses an
action $s_i(v_i)$. Nature uniformly draws one player from each
population, and the game is played with those players' actions.  In
other words, the utility of player $v_i$ from population $i$ is:
\begin{equation}\label{eqn:agent-utilities}
U_{i,v_i}^{\AG}(s)=\E_{\vr}\left[U_i(s(\vr);\vr_i)\cdot 1\{\vr_i = v_i\}\right]
\end{equation}
Notice that the population interpretation of the Bayesian game is in
fact a stochastic game of complete information.

There are multiple generalizations of coarse correlated equilibria
from games of complete information to games of incomplete
information \vsedit{(c.f. \cite{Forges1993}, \cite{Bergemann2011}, \cite{Caragiannis2014})}.
One of the canonical definitions is simply the coarse correlated
equilibrium of the stochastic game of complete information that
is defined by the population interpretation above.\footnote{This
notion is the coarse analog of the \emph{agent normal form Bayes
correlated equilibrium} defined in Section 4.2 of
Forges~\cite{Forges1993}.}

\vsdelete{: the
utility of a player $i$ in the complete information game is his
ex-ante expected utility from the mechanism. The strategy space of
player $i$ is $\Sigma_i=\A_i^{\V_i}$. For a strategy profile
$s\in \Sigma$, the utility of a player in the complete information
game is then:
\begin{equation}
U_i^{\ex}(s) = \E_{\vr}\left[U_i(s(\vr);\vr_i)\right]
\end{equation}
A $\BCCE$ is simply a $\CCE$ of this complete information game. For completeness we provide the formal definition below.}
\vsedit{\begin{defn}[Bayesian coarse correlated equilibrium - $\BCCE$]\label{defn:bcce}
A randomized strategy profile $\sr\in \Delta(\Sigma)$ is a Bayesian coarse correlated equilibrium if for every $a_i'\in A_i$ and for every $v_i\in \V_i$:
\begin{equation}
\E_{\sr}\E_{\vr}\left[U_i(\sr(\vr); \vr_i)~|~\vr_i=v_i\right] \geq \E_{\sr}\E_{\vr}\left[U_i(a_i',\sr_{-i}(\vr_{-i}); \vr_i)~|~\vr_i=v_i\right]
\end{equation}
\end{defn}}

 In a game of  incomplete information the welfare in
equilibrium will be compared to the expected ex-post optimal social
welfare $\E_{\vr}[\opt(\vr)]$.  We will refer to the worst-case ratio
of the expected optimal social welfare over the expected social
welfare of any $\BCCE$ as $\BCCE$-$\poa$.  

\section{Learning in Repeated Bayesian Game}

\vsedit{Consider a repeated version of the population interpretation of a Bayesian game. At each} iteration \vsedit{one player $v_i$}
from each population is sampled uniformly and independently from other
populations. The set of chosen players then participate in an instance
of a mechanism $\M$. We assume that each \vsedit{player $v_i\in \V_i$}, uses
some no-regret learning rule to play in this repeated
game.\footnote{An equivalent and standard way to view a Bayesian game
is that each player draws his value independently from his
distribution each time the game is played.  In this interpretation the
player plays by choosing a strategy that maps his value to an action
(or distribution over actions).  In this interpretation our no-regret
condition requires that the player not regret his actions for each
possible value.}  In Definition~\ref{repeated-game}, we describe the
structure of the game and our notation more elaborately.
\vsedit{\begin{defn}
\label{repeated-game}
The {\em repeated Bayesian game of $\M$} proceeds as
follows. In stage $t$:\vspace{-.5em}
\begin{enumerate}
\item Each player $v_i\in \V_i$ in each population $i$ picks an action $s_{i}^t(v_i)\in A_i$. We denote with $s_i^t\in \A_i^{|\V_i|}$ the function that maps a player $v_i\in \V_i$ to his action.
\item From each population $i$ one player $v_i^t\in \V_i$ is selected uniformly at random. Let $v^t=(v_1^t,\ldots,v_n^t)$ be the chosen profile of players and $s^t(v^t)=(s_1^t(v_1^t),\ldots,s_n^t(v_n^t))$ be the profile of chosen actions.
\item Each player $v_i^t$ participates in an instance of game $\M$, in the role of player $i\in [n]$, with action $s_i^t(v_i^t)$ and experiences a utility of $U_i(s^t(v^t); v_i^t)$. All players  not selected in Step 2 experience zero utility.
\end{enumerate}
\end{defn}\vspace{-1em}
}

\begin{remark}
We point out that for each player in a population to achieve no-regret he does not need to know the distribution of values in other populations. There exist algorithms that can achieve the no-regret property and simply require an oracle that returns the utility of a player at each iteration. Thus all we need to assume is that each player receives as feedback his utility at each iteration.
\end{remark}

\begin{remark}
We also note that our results would extend to the case where at each period multiple matchings are sampled independently and players potentially participate in more than one instance of the mechanism $\M$ and potentially with different players from the remaining population. The only thing that the players need to observe in such a setting is their average utility that resulted from their action $s_i^t(v_i)\in \A_i$ from all the instances that they participated at the given period. Such a scenario seems an appealing model in online ad auction marketplaces where players receive only average utility feedback from their bids.
\end{remark}

\paragraph{Bayesian Price of Anarchy for No-regret Learners.}
In this repeated game setting we want to compare the average social welfare of any sequence of play where each player uses a vanishing regret algorithm versus the average optimal welfare. Moreover, we want to quantify the worst-case such average welfare over all possible valuation distributions within each population:
\begin{equation}
\sup_{\F_1,\ldots,\F_n} \lim\sup_{T\rightarrow \infty} \textstyle{\frac{\sum_{t=1}^{T} \opt(v^t)}{ \sum_{t=1}^T SW^\M(s^t(v^t);v^t)}}
\end{equation}
We will refer to this quantity as the \emph{Bayesian price of anarchy
for no-regret learners}.  The numerator of this term is simply the
average optimal welfare when players from each population are drawn
independently in each stage; it converges almost surely to the
expected ex-post optimal welfare $\E_{\vr}[\opt(\vr)]$ of the stage game.
Our main theorem is that if the mechanism is smooth and players follow
no-regret strategies then the expected welfare is guaranteed to be
close to the optimal welfare.

\begin{theorem}[Main Theorem]
If a mechanism is $(\lambda,\mu)$-smooth then the average \etedit{(over time)} welfare of
any no-regret dynamics of the repeated Bayesian game achieves average
social welfare at least $\frac{\lambda}{\max\{1,\mu\}}$ of the average
optimal welfare, i.e. $\poa\leq \frac{\max\{1,\mu\}}{\lambda}$, \vsedit{almost surely}.
\end{theorem}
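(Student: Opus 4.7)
The plan is to lift the single-shot smoothness inequality to the repeated population game by summing no-regret inequalities over all agents $(i,v_i)$ and then invoking the stage-wise smoothness bound. The argument parallels the proof of Corollary~\ref{cor:smooth}, but with extra bookkeeping for Nature's random draw $v^t$ at each stage and an almost-sure rather than in-expectation conclusion.

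First I would set up the no-regret hypothesis at the level of individual agents. At stage $t$, agent $v_i \in \V_i$ plays action $s_i^t(v_i)$ and experiences realized utility $U_i(s^t(v^t); v_i) \cdot 1\{v_i^t = v_i\}$. Because the learning algorithm is run on these realized payoffs, for every fixed deviation action $a_i' \in \A_i$ the agent's no-regret guarantee is an almost-sure statement
\begin{equation*}
\limsup_{T\to\infty} \frac{1}{T} \sum_{t=1}^T \bigl[ U_i(a_i', s_{-i}^t(v_{-i}^t); v_i) - U_i(s^t(v^t); v_i)\bigr] \cdot 1\{v_i^t = v_i\} \le 0.
\end{equation*}
Using the standard convention that in the Bayesian setting the smoothness witness $\al_i^*(\cdot)$ from Definition~\ref{def:smooth-mech} depends on $v$ only through $v_i$, I would instantiate $a_i'$ with (a sample from) $\al_i^*(v_i)$; by linearity of utility the no-regret guarantee extends to this randomized comparator. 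Since there are only finitely many pairs $(i,v_i)$, intersecting the corresponding almost-sure events yields a single almost-sure event $\Omega^\star$ on which all regret bounds hold simultaneously.

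Next I would aggregate within each population. Because $v_i^t$ takes exactly one value per stage, the identity $\sum_{v_i\in\V_i} f(v_i)\cdot 1\{v_i^t = v_i\} = f(v_i^t)$ collapses the indicator once we sum the per-agent regret inequality over $v_i\in\V_i$, yielding on $\Omega^\star$
\begin{equation*}
\limsup_{T\to\infty} \frac{1}{T} \sum_{t=1}^T \bigl[ U_i(\al_i^*(v_i^t), s_{-i}^t(v_{-i}^t); v_i^t) - U_i(s^t(v^t); v_i^t)\bigr] \le 0.
\end{equation*}
Summing over $i$ and applying the $(\lambda,\mu)$-smoothness inequality stage-by-stage at $v=v^t$ and $a=s^t(v^t)$ lower-bounds the first term by $\lambda \cdot \opt(v^t) - \mu \cdot \rev(s^t(v^t))$, so
\begin{equation*}
\lambda \cdot \frac{1}{T}\sum_{t=1}^T \opt(v^t) \le \frac{1}{T}\sum_{t=1}^T \sum_i U_i(s^t(v^t); v_i^t) + \mu \cdot \frac{1}{T}\sum_{t=1}^T \rev(s^t(v^t)) + o(1).
\end{equation*}
Bounding the right-hand side by $\max\{1,\mu\} \cdot \frac{1}{T}\sum_{t=1}^T SW(s^t(v^t);v^t)$ and passing to $\limsup$ delivers the claimed price-of-anarchy bound on $\Omega^\star$.

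The main obstacle, beyond the bookkeeping, is guaranteeing that no-regret holds \emph{almost surely} against the realized, stochastic payoff sequence rather than only in expectation. For standard adversarial no-regret algorithms such as Hedge or EXP3 this is available via their high-probability regret bounds together with Borel--Cantelli, and since only finitely many agent-type pairs are involved a union bound preserves the almost-sure conclusion. The other subtlety worth flagging is the assumption that $\al_i^*$ depends on $v$ only through $v_i$; otherwise agent $v_i$'s comparator would drift with $t$ through the random $v_{-i}^t$, and the fixed-action no-regret guarantee would not directly apply to the per-agent sum.
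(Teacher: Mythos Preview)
Your argument has a genuine gap, and you have in fact already identified it yourself: you assume that the smoothness witness $\al_i^*(v)$ depends on $v$ only through $v_i$. That is \emph{not} Definition~\ref{def:smooth-mech} in this paper; it is the strictly stronger \emph{semi-smoothness} condition of Caragiannis et al.\ that the paper explicitly sets aside (see the remark at the end of Section~\ref{sec:bce}). Under the paper's definition, $\al_i^*(v)$ may depend on the full profile $v$, so agent $v_i$ cannot use $\al_i^*(v^t)$ as a fixed comparator: $v_{-i}^t$ is unknown to her and changes every round, exactly the drifting-comparator problem you flag. Your direct stage-wise argument therefore proves the theorem only for semi-smooth mechanisms, not for the class the paper treats.

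The paper's proof handles this dependence with a different, two-step route. First (Theorem~\ref{thm:extension-theorem}), each agent $v_i$ deviates by privately sampling an independent $\wb\sim\F$ and playing $\al_i^*(v_i,\wb_{-i})$; this \emph{is} a fixed randomized action for agent $v_i$, and an exchange-of-variables argument (swapping $\vr_i$ with $\wb_i$, valid because they are i.i.d.\ and independent of $\vr_{-i}$) recovers the smoothness inequality \emph{in expectation over $\vr$}. But that expectation is over Nature's draw, not over the realized sequence $v^1,\ldots,v^T$, so one cannot simply plug it into a per-stage sum. This is why the paper first proves (Lemma~\ref{lem:convergence-to-product}) that the empirical distribution of $(s^t,v^t)$ converges almost surely to a \emph{product} $D_s\times\F$, i.e.\ that the strategy component decouples from Nature's draws in the limit; only then can the ex-ante smoothness of $\M^{\AG}$ be applied to the limit and pulled back to the time average. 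Your direct approach bypasses both of these steps, which is precisely why it needs the extra semi-smoothness hypothesis to go through.
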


\paragraph{Roadmap of the proof.}
\vsdelete{In Section~\ref{sec:bce} we analyze single-shot Bayesian games and define a
Bayesian generalization of coarse correlated equilibria. The repeated
Bayesian game can be viewed as a repetition of a stochastic
game. With this interpretation, i} In
Section \ref{sec:convergence-to-bcce}, we show that any vanishing
regret sequence of play of the repeated Bayesian game, will
converge \emph{almost surely} to the Bayesian version of a coarse
correlated equilibrium of the incomplete information stage
game. Therefore the Bayesian price of total anarchy will be upper
bounded by the efficiency of guarantee of any Bayesian coarse
correlated equilibrium. Finally, in Section \ref{sec:extension} we
show that the price of anarchy bound of smooth mechanisms directly
extends to Bayesian coarse correlated equilibria, thereby providing an
upper bound on the Bayesian price of total anarchy of the repeated
game.

\begin{remark}
We point out that our definition of $\BCCE$ is inherently different and more restricted than the one defined in Caragiannis et al.~\cite{Caragiannis2014}. There, a $\BCCE$ is defined as a joint distribution $D$ over $\V\times \A$, such that if $(\vr,\al)\sim D$ then for any $v_i\in \V_i$ and $a_i'(v_i)\in \A_i$:
\begin{equation}
\textstyle{\E_{(\vr,\al)}\left[U_i(\al; v_i)\right] \geq \E_{(\vr,\al)}\left[U_i(a_i'(\vr_i),\al_{-i}; v_i)\right]}
\end{equation}
The main difference is that the product distribution defined by a distribution in $\Delta(\Sigma)$ and the distribution of values, cannot produce any possible joint distribution over $(\V,\A)$, but the type of joint distributions are restricted to satisfy a conditional independence property described by \cite{Forges1993}. Namely that player $i$'s action is conditionally independent of some other player $j$'s value, given player $i$'s type.
Such a conditional independence property is essential for the guarantees that we will present in this work to extend to a
$\BCCE$ and hence do not seem to extend to the notion given in \cite{Caragiannis2014}.
However, as we will show in Section \ref{sec:convergence-to-bcce}, the no-regret dynamics that we analyze, which are mathematically equivalent to the dynamics in \cite{Caragiannis2014}, do converge to \etedit{this smaller} 
set of $\BCCE$ that we define and for which our efficiency guarantees will extend. This extra convergence property is not needed when the mechanism satisfies the stronger \emph{semi-smoothness} property defined in \cite{Caragiannis2014} and thereby was not needed to show efficiency bounds in their setting.
\end{remark}

\section{Convergence of Bayesian No-Regret to $\BCCE$}\label{sec:convergence-to-bcce}

\vsedit{In this section we show that no-regret learning in the repeated Bayesian game converges almost surely to the set of Bayesian coarse correlated equilibria. Any given sequence of play of the repeated Bayesian game, which we defined in Definition \ref{repeated-game}, gives rise to a sequence of strategy-value pairs $(s^t,v^t)$ where $s^t=(s_1^t,\ldots,s_n^t)$ and $s_i^t\in \A_i^{\V_i}$, captures the actions that each player $v_i$ in population $i$ would have chosen, had they been picked.  Then observe that all that matters to compute the average social welfare of the game for any given time step $T$, is the empirical distribution of pairs $(s,v)$}, up till time step $T$, denoted as $D^T$, i.e. if $(\sr^T,\vr^T)$ is a random sample from $D^T$:
\begin{equation}
\textstyle{\frac{1}{T} \sum_{t=1}^T SW(s^t(v^t); v^t) = \E_{(\sr^T,\vr^T)}\left[ SW(\sr^T(\vr^T); \vr^T)\right]}
\end{equation}

\begin{lemma}[Almost sure convergence to $\BCCE$]\label{lem:convergence-to-product}
\vsedit{Consider a sequence of play of the random matching game, where each player uses a vanishing regret algorithm and let $D^T$ be the empirical distribution of (strategy, valuation) profile pairs up till time step $T$. Consider any subsequence of $\{D^T\}_T$ that converges in distribution to some distribution $D$.} Then, almost surely, $D$ is a product distribution, i.e. $D=D_s\times D_v$, with $D_s \in \Delta(\Sigma)$ and $D_v \times \Delta(\V)$ such that $D_v=\F$ and $D_s\in \BCCE$ of the static incomplete information game with distributional beliefs $\F$.
\end{lemma}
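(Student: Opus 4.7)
The plan is to decompose the lemma into three assertions about any weak limit point $D$ of a convergent subsequence of $\{D^T\}$: (i) the $\V$-marginal of $D$ equals $\F$; (ii) $D$ factors as a product $D_s \times D_v$; and (iii) the strategy marginal $D_s$ satisfies the $\BCCE$ inequalities of Definition~\ref{defn:bcce}. Throughout, I would work on the probability-$1$ event on which all invoked strong laws and no-regret guarantees hold simultaneously. Assertion (i) is then immediate from the strong law of large numbers applied to the i.i.d.\ sequence of valuation profiles $\vr^t \sim \F$.

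The core technical tool for (ii) and (iii) is a martingale strong law. The key structural fact is that the played strategy profile $s^t$ is measurable with respect to the history $\mathcal{F}_{t-1}$ (since each player's no-regret algorithm processes only past feedback), whereas the fresh draw $v^t$ is independent of $\mathcal{F}_{t-1}$ and distributed according to $\F$. Consequently, for any bounded function $g:\Sigma\times\V\to\R$, the sequence
\[
X_t := g(s^t, v^t) - \E_{\vr \sim \F}\bigl[g(s^t, \vr)\bigr]
\]
is a bounded martingale difference sequence, so Azuma--Hoeffding together with Borel--Cantelli gives $\tfrac{1}{T}\sum_{t=1}^T X_t \to 0$ almost surely. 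Writing $D_s^T$ for the strategy marginal of $D^T$, this yields
\[
\E_{(\sr^T,\vr^T)\sim D^T}[g] \;=\; \E_{\sr^T \sim D_s^T}\E_{\vr \sim \F}\bigl[g(\sr^T,\vr)\bigr] + o(1),
\]
and passing to the convergent subsequence produces $\E_D[g] = \E_{D_s \times \F}[g]$. Letting $g$ range over a separating family of bounded continuous test functions forces $D = D_s \times \F$, which simultaneously certifies (i) and (ii).

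For (iii), I would invoke the no-regret property of each player $v_i \in \V_i$ separately. Since this player experiences utility $U_i(s^t(v^t);v_i)\cdot \mathbf{1}\{v_i^t = v_i\}$ at round $t$ (and $0$ otherwise), no-regret against any fixed $a_i' \in \A_i$ reads
\[
\limsup_{T \to \infty} \frac{1}{T}\sum_{t=1}^T \mathbf{1}\{v_i^t=v_i\}\Bigl[U_i\bigl(a_i',s_{-i}^t(v_{-i}^t);v_i\bigr) - U_i\bigl(s_i^t(v_i),s_{-i}^t(v_{-i}^t);v_i\bigr)\Bigr] \le 0.
\]
The summand is a bounded function of $(s^t,v^t)$, so the martingale strong law from step (ii) and the product structure of $D$ imply that this time-average converges along the subsequence to
\[
\tfrac{1}{|\V_i|}\,\E_{\sr \sim D_s}\E_{\vr_{-i}\sim \F_{-i}}\!\Bigl[U_i(a_i',\sr_{-i}(\vr_{-i});v_i) - U_i(\sr_i(v_i),\sr_{-i}(\vr_{-i});v_i)\Bigr],
\]
where the factor $1/|\V_i|$ is $\Pr_{\F}[\vr_i = v_i]$ under the uniform type distribution. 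Multiplying through by the positive constant $|\V_i|$ preserves the sign and yields exactly the inequality in Definition~\ref{defn:bcce}.

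The main obstacle is step (ii): relating the adapted, history-dependent process $\{s^t\}$ to the i.i.d.\ draws $\{v^t\}$. A direct law-of-large-numbers argument is unavailable because $s^t$ is non-stationary and strongly coupled to past realizations of $v^{1:t-1}$; the martingale strong law is precisely what decorrelates $s^t$ from $v^t$ in the time-average. A secondary technical point is that when $\A_i$ is not finite the martingale concentration must be applied uniformly over a countable dense family of test functions (and over the countably many choices of $(v_i,a_i')$ in step (iii)) in order to produce a single probability-$1$ event on which every limit point satisfies all the required conditions; this is routine given finiteness of $\V$, but must be spelled out.
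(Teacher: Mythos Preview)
Your proposal is correct and takes a genuinely different route from the paper. The paper's proof partitions $[T]$ into blocks $\T_s = \{t : s^t = s\}$ indexed by the realized strategy profile $s\in\Sigma$, and argues that within each block the empirical distribution of valuation profiles converges to $\F$ (because $v^t$ is freshly drawn after $s^t$ is fixed); this delivers the product structure $D = D_s\times\F$, and the $\BCCE$ inequality then follows by passing to the limit in the grouped regret expression. You instead bypass the grouping entirely: you observe directly that $g(s^t,v^t)-\E_{\vr}[g(s^t,\vr)]$ is a bounded martingale difference sequence (predictability of $s^t$, independence of the fresh draw $v^t$) and apply Azuma--Hoeffding plus Borel--Cantelli to average out every bounded test function at once.

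Your argument is arguably cleaner on the key technical point: the paper calls the filtered samples $\{v^t : t\in\T_s\}$ ``independent samples from $\F$,'' which is true only in the sense that $\mathbf{1}\{s^t=s\}$ is $\mathcal{F}_{t-1}$-measurable---precisely the martingale structure you make explicit. Your approach also does not rely on finiteness of $\Sigma$ (only on finiteness of $\V$ plus a countable separating family of test functions, as you note), whereas the paper's block decomposition implicitly uses that $\Sigma$ is finite. On the other hand, the paper's conditional-distribution viewpoint makes the Forges-style conditional-independence property (the law of $v$ given $s$ equals $\F$ in the limit) visually explicit, which ties in well with the remark following the lemma distinguishing this $\BCCE$ notion from the one in~\cite{Caragiannis2014}.
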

\begin{proof}
\vsedit{We will denote with \[r_i(a_i^*, a; v_i)=U_i(a_i^*,a_{-i}; v_i) - U_i(a;v_i),\] the regret of player $v_i$ from population $i$, for action $a_i^*$ at action profile $a$. For a $v_i\in \V_i$ let $x_i^t(v_i)={\bf 1}\{v_i^t=v_i\}$. Since the sequence has vanishing regret for each player $v_i$ in population $P_i$, it must be that for any $s_i^*\in \Sigma_i$:
\begin{equation}\label{eqn:value-regret}
\textstyle{\sum_{t=1}^{T} x_i^t(v_i)\cdot r_i\left( s_i^*(v_i), s^t(v^t); v_i\right) \leq  o(T)}
\end{equation}}

\vsedit{For any fixed $T$, let $D_s^T\in \Delta(\Sigma)$ denote the empirical distribution of $s^t$ and let $\sr$ be a random sample from $D_s^T$. For each $s\in \Sigma$, let $\T_{s}\subset [T]$ denote the time steps such that $s^t=s$ for each $t\in \T_{s}$. Then we can re-write Equation \eqref{eqn:value-regret} as:
\begin{equation}\label{eqn:pre-pre-limit}
\textstyle{\E_{\sr}\left[ \frac{1}{|\T_{\sr}|} \sum_{t\in \T_{\sr}} x_i^t(v_i)\cdot r_i\left( s_i^*(v_i), s^t(v^t); v_i\right) \right]\leq \frac{o(T)}{T}}
\end{equation}
For any $s\in \Sigma$ and $w\in \V$, let $\T_{s,w}=\{t\in \T_{s}: v^t=w\}$. Then we can re-write Equation \eqref{eqn:pre-pre-limit} as:
\begin{equation}\label{eqn:pre-limit}
\textstyle{\E_{\sr}\left[ \sum_{w\in \V}\frac{|\T_{\sr,w}|}{|\T_{\sr}|} 1\{w_i=v_i\}\cdot r_i\left(s_i^*(v_i), \sr(w); v_i\right)\right]\leq   \frac{o(T)}{T}}
\end{equation}}

Now we observe that $\frac{|\T_{s,w}|}{|\T_{s}|}$ is the empirical frequency of the valuation vector $w\in \V$, when filtered at time steps where the strategy vector was $s$.  Since at each time step $t$ the valuation vector $v^t$ is picked independently from the distribution of valuation profiles $\F$, this is the empirical frequency of $\T_{s}$ independent samples from $\F$. 

By standard arguments from empirical processes theory, if $\T_{s}\rightarrow \infty$ then this empirical distribution converges almost surely to the distribution $\F$. On the other hand if $\T_{s}$ doesn't go to $\infty$, then the empirical frequency of strategy $s$ vanishes to $0$ as $T\rightarrow \infty$ and therefore has measure zero in the above expectation as $T\rightarrow \infty$. Thus for any convergent subsequence of $\{D^T\}$, if $D$ is the limit distribution, then if $s$ is
in the support of $D$, then almost surely the distribution of $w$ conditional on strategy $s$ is $\F$. Thus we can write $D$ as a product distribution $D_s\times \F$. 

\vsedit{Moreover, if we denote with ${\bf w}$ the random variable that follows distribution $\F$, then the limit of Equation \eqref{eqn:pre-limit} for any convergent sub-sequence, will give that:
\begin{equation*}
\text{a.s.: }\E_{\sr\sim D_s} \E_{{\bf w}\sim \F}\left[ 1\{{\bf w}_i=v_i\}\cdot r_i\left(s_i^*(v_i),\sr({\bf w}); v_i\right)\right] \leq 0
\end{equation*}
Equivalently, we get that $D_s$ will satisfy that for all $v_i\in \V_i$ and for all $s_i^*$:
\begin{equation*}
\text{a.s.: }\E_{\sr\sim D_s} \E_{{\bf w}\sim \F}\left[r_i\left(s_i^*({\bf w}_i),\sr({\bf w}); {\bf w}_i\right)~|~ {\bf w}_i = v_i\right] \leq 0
\end{equation*}
The latter is exactly the $\BCCE$ condition from Definition \ref{defn:bcce}. Thus $D_s$ is in the set of $\BCCE$ of the static incomplete incomplete information game among $n$ players, where the type profile is drawn from $\F$.}
\end{proof}

\vsedit{Given the latter convergence theorem we can easily conclude the following the following theorem, whose proof is given in the supplementary material.
\begin{theorem}\label{thm:bce_poa_pota}
The price of anarchy for Bayesian no-regret dynamics is upper bounded
by the price of anarchy of Bayesian coarse correlated equilibria, almost surely.
\end{theorem}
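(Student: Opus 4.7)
The plan is to combine the almost sure convergence statement of Lemma~\ref{lem:convergence-to-product} with the law of large numbers for the optimal welfare and a compactness argument on the space of empirical distributions. Throughout, since $\V_i$ and $\A_i$ are finite, both $\Sigma = \prod_i \A_i^{\V_i}$ and $\V$ are finite, so the simplex $\Delta(\Sigma \times \V)$ is compact and the per-round social welfare $SW(s(v);v)$ and optimal welfare $\opt(v)$ are bounded.

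First I would handle the numerator of the Bayesian price of anarchy ratio. Because $v^t$ is drawn i.i.d.\ from $\F$ across rounds, the strong law of large numbers gives
\begin{equation}
\textstyle{\lim_{T\to\infty}\frac{1}{T}\sum_{t=1}^T \opt(v^t) \;=\; \E_{\vr\sim\F}[\opt(\vr)]\quad\text{almost surely.}}
\end{equation}
Next, for the denominator, recall that $\frac{1}{T}\sum_{t=1}^T SW(s^t(v^t);v^t) = \E_{(\sr^T,\vr^T)\sim D^T}[SW(\sr^T(\vr^T);\vr^T)]$, so the denominator is a continuous (in fact linear) bounded functional of $D^T$. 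By compactness of $\Delta(\Sigma\times\V)$, every subsequence of $\{D^T\}$ admits a further convergent subsequence.

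The core of the argument is then to bound every subsequential limit. Fix any convergent subsequence $D^{T_k}\to D$. Lemma~\ref{lem:convergence-to-product} guarantees that, on an almost sure event, $D = D_s \times \F$ for some $D_s\in\BCCE$ of the static Bayesian game with type distribution $\F$. By weak convergence and boundedness of $SW$,
\begin{equation}
\textstyle{\lim_{k\to\infty}\frac{1}{T_k}\sum_{t=1}^{T_k} SW(s^t(v^t);v^t) \;=\; \E_{\sr\sim D_s}\E_{\vr\sim\F}[SW(\sr(\vr);\vr)].}
\end{equation}
Applying the definition of the Bayesian coarse correlated equilibrium price of anarchy ($\BCCE\text{-}\poa$) to $D_s$ lower bounds this limit by $\frac{1}{\BCCE\text{-}\poa}\cdot \E_{\vr}[\opt(\vr)]$. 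Since every subsequential limit of the average welfare enjoys this lower bound, the full sequence satisfies $\liminf_T \frac{1}{T}\sum_{t=1}^T SW(s^t(v^t);v^t) \geq \frac{1}{\BCCE\text{-}\poa}\E_{\vr}[\opt(\vr)]$, almost surely.

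Combining the two pieces on the intersection of the two almost sure events, which still has probability one,
\begin{equation}
\textstyle{\limsup_{T\to\infty}\frac{\sum_{t=1}^T \opt(v^t)}{\sum_{t=1}^T SW(s^t(v^t);v^t)} \;\leq\; \frac{\E_{\vr}[\opt(\vr)]}{\frac{1}{\BCCE\text{-}\poa}\E_{\vr}[\opt(\vr)]} \;=\; \BCCE\text{-}\poa,}
\end{equation}
which is the desired bound. The main obstacle I expect is justifying the passage to the limit cleanly: one must argue both that Lemma~\ref{lem:convergence-to-product} applies simultaneously to every convergent subsequence on a single almost sure event (this is fine because the lemma's exceptional null set can be taken to be the failure set of the Glivenko--Cantelli--style empirical frequency convergence, which is a single event), and that taking $\limsup$ of a ratio is compatible with the lower bound on the denominator along subsequences. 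The latter follows from the elementary fact that if $a_T \to a$ and $\liminf b_T \geq b > 0$, then $\limsup a_T/b_T \leq a/b$.
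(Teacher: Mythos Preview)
Your proposal is correct and follows essentially the same route as the paper's proof: both invoke Lemma~\ref{lem:convergence-to-product} on subsequential limits of $\{D^T\}$, use the strong law of large numbers for the numerator $\frac{1}{T}\sum_t \opt(v^t)$, and combine via a $\liminf$/$\limsup$ argument to bound the ratio. If anything, you are slightly more explicit than the paper about the compactness of $\Delta(\Sigma\times\V)$, the need for a single null set covering all subsequences, and the elementary $\limsup$-of-a-ratio fact; the paper additionally notes (and you implicitly assume) the trivial case $\E_{\vr}[\opt(\vr)]=0$.
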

}

\section{Efficiency of Smooth Mechanisms at Bayes Coarse Correlated Equilibria}\label{sec:extension}

In this section we show that smoothness of a mechanism $\M$ implies that any $\BCCE$ of the incomplete information setting achieves at least $\frac{\lambda}{\max\{1,\mu\}}$ of the expected optimal welfare. \vsedit{To show this we will adopt the interpretation of $\BCCE$ that we used in the previous section, as coarse correlated equilibria of a more complex normal form game; the stochastic agent normal form representation of the Bayesian game. We can interpret this complex normal form game as the game that arises from a complete information mechanism $\M^{\AG}$ among $\sum_i |\V_i|$ players, which randomly samples one player from each of the $n$ population and where the utility of a player in the complete information mechanism $\M^{\AG}$ is given by Equation \eqref{eqn:agent-utilities}. The set of possible outcomes in this agent game corresponds to the set of mappings from a profile of chosen players to an outcome in the underlying mechanism $\M$. The optimal welfare of this game, is then the expected ex-post optimal welfare $\opt^{\AG} = \E_{\vr}\left[\opt(\vr)\right]$.}

The main theorem that we will show is that whenever mechanism $\M$ is $(\lambda,\mu)$-smooth, then also mechanism $\M^{\AG}$ is $(\lambda,\mu)$-smooth. Then we will invoke a theorem of \cite{Syrgkanis2013,Roughgarden2009}, which shows that any coarse correlated equilibrium of a complete information mechanism achieves at least $\frac{\lambda}{\max\{1,\mu\}}$ of the optimal welfare. \vsedit{By the equivalence between $\BCCE$ and $\CCE$ of this complete information game, we get that every $\BCCE$ of the Bayesian game achieves at least $\frac{\lambda}{\max\{1,\mu\}}$ of the expected optimal welfare. }

\vsdelete{More formally, we define a mechanism $\M^{\AG}$
 where the action space of a player is the set of functions from a type to an action of the complete information setting, $\A_i^{\ex}\triangleq (\V_i \rightarrow \A_i)$. 
The utility of a player in this mechanism is his ex-ante expected utility, $U_i^{\ex}(s) = \E_{\vr}\left[U_i^\M(s(v))\right]$. Observe that the utility $U_i^{\ex}$ does not depend on any private information of the player. It only depends on the underlying utility function $U_i^\M$ and on the distributions $\F_i$ which are all common knowledge. The set of possible outcomes in this ex-ante game corresponds to the set of mappings from an allocation profile to an outcome in the underlying mechanism $\M$. The optimal welfare of this game, is then the expected ex-post optimal welfare $\opt^{\ex} = \E_{\vr}\left[\opt(\vr)\right]$.
The main result of this section is to show that if a mechanism $\M$ is smooth according to definition
\ref{def:smooth-mech} then $\M^{\ex}$ is also a smooth mechanism and therefore by \cite{Syrgkanis2013,Roughgarden2009} every coarse correlated equilibria of $\M^{\ex}$, which corresponds to a $\BCCE$ of the incomplete information game, will achieve at least $\frac{\lambda}{\max\{1,\mu\}}$ of the expected optimal welfare. }

\begin{theorem}[From complete information to Bayesian smoothness]\label{thm:extension-theorem}
If a mechanism $\M$ is $(\lambda,\mu)$-smooth, then for any vector of independent valuation distributions $\F=(\F_1,\ldots,\F_n)$, the complete information mechanism $\M^{\AG}$ is also $(\lambda,\mu)$-smooth.
\end{theorem}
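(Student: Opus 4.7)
The plan is to exhibit, for each agent $(i, v_i)$ in the agent normal form game $\M^{\AG}$, a randomized deviation action in $\Delta(\A_i)$ formed by averaging the smoothness deviation of $\M$ over the random draws of opponents' valuations. Concretely, if $\al_i^*(v) \in \Delta(\A_i)$ are the deviations witnessing $(\lambda,\mu)$-smoothness of $\M$, I would set
\[
a_i^*(v_i) \;:=\; \E_{\tilde{\vr}_{-i} \sim \F_{-i}}\!\left[\al_i^*(v_i, \tilde{\vr}_{-i})\right].
\]
These deviations depend only on the agent's own identity and the publicly known product distribution $\F_{-i}$, so they form an independent product profile as Definition~\ref{def:smooth-mech} demands. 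Note that in the agent game there is effectively no valuation uncertainty once $\F$ is fixed, so smoothness only needs to be established for this single ``valuation profile''.

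Next I would unfold the sum of deviation utilities. For any $s \in \Sigma$, Equation~\eqref{eqn:agent-utilities} gives
\[
U_{i,v_i}^{\AG}(a_i^*(v_i), s_{-(i,v_i)}) \;=\; \E_{\vr}\!\left[U_i(a_i^*(v_i), s_{-i}(\vr_{-i}); \vr_i) \cdot 1\{\vr_i = v_i\}\right],
\]
so summing over $v_i \in \V_i$ collapses the indicator and yields $\E_{\vr}[U_i(a_i^*(\vr_i), s_{-i}(\vr_{-i}); \vr_i)]$. Substituting the definition of $a_i^*$ and using linearity of $U_i$ in the player's own mixed action, this becomes
\[
\E_{\vr}\E_{\tilde{\vr}_{-i}}\!\left[U_i(\al_i^*(\vr_i, \tilde{\vr}_{-i}), s_{-i}(\vr_{-i}); \vr_i)\right].
\]
Because $\vr_{-i}$ and $\tilde{\vr}_{-i}$ are i.i.d.\ and appear in disjoint places of the integrand, I can rename them so that the ``true'' profile fed into $\al_i^*$ is $\vr$ while the ``decoupled'' profile feeding the other agents' strategies is $\tilde{\vr}_{-i}$. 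Summing over $i$ and letting $\tilde{\vr}$ be a single independent copy of $\vr$, the grand total equals
\[
\E_{\vr,\tilde{\vr}}\!\left[\sum_{i \in [n]} U_i(\al_i^*(\vr), s_{-i}(\tilde{\vr}_{-i}); \vr_i)\right].
\]

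At this point the $(\lambda,\mu)$-smoothness of $\M$ can be applied pointwise inside the expectation: for each realization of $(\vr,\tilde{\vr})$, using action profile $a = s(\tilde{\vr})$ and valuation profile $v = \vr$ shows the bracket is at least $\lambda\,\opt(\vr) - \mu\,\rev(s(\tilde{\vr}))$. Taking expectations yields $\lambda\,\E_{\vr}[\opt(\vr)] - \mu\,\E_{\tilde{\vr}}[\rev(s(\tilde{\vr}))] = \lambda\,\opt^{\AG} - \mu\,\rev^{\AG}(s)$, which is exactly the $(\lambda,\mu)$-smoothness condition for $\M^{\AG}$.

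The main obstacle, and the step requiring the most care, is this ``decoupling'' move: each agent's deviation only has access to its own type $v_i$ and the distribution $\F_{-i}$, whereas the smoothness deviation of $\M$ is allowed to depend on the full valuation profile $v$. The trick works only because (a) values are drawn independently across populations, so $\vr_{-i}$ and a fresh copy $\tilde{\vr}_{-i}$ are exchangeable, and (b) $U_i$ is linear in its own mixed action, so the marginalization hidden inside $a_i^*$ can be pulled out as an outer expectation. Together these two facts let us replace the ``unknown'' coordinates $v_{-i}$ inside $\al_i^*$ by the genuine coordinates $\vr_{-i}$, at the cost of replacing $s_{-i}(\vr_{-i})$ by $s_{-i}(\tilde{\vr}_{-i})$ on the opponents' side, after which smoothness of $\M$ can be invoked in its original pointwise form.
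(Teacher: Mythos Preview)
Your proposal is correct and follows essentially the same argument as the paper's proof: define the deviation for agent $(i,v_i)$ by sampling an independent copy of the opponents' types and using the smoothness deviation $\al_i^*(v_i,\cdot)$ of $\M$, collapse the indicator by summing over $v_i\in\V_i$, then exchange the two i.i.d.\ copies of the opponent-type vector so that $\al_i^*$ receives a full ``true'' profile and $s_{-i}$ receives the decoupled copy, and finally apply smoothness of $\M$ pointwise. The only cosmetic difference is that the paper performs the swap on the single coordinate $\vr_i\leftrightarrow\wb_i$ whereas you swap $\vr_{-i}\leftrightarrow\tilde{\vr}_{-i}$; both are justified by independence and lead to the identical expression.
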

\begin{proof}
\vsedit{Consider the following randomized deviation for each player $v_i\in\V_i$ in population $i$}: \vsdelete{that depends only on the information that he has which is his own value $v_i$:} He random samples a valuation profile $\wb\sim \F$. Then he plays according to the randomized action $\sr_i^*(v_i,\wb_{-i})$, i.e., the player deviates using the randomized action guaranteed by the smoothness property of mechanism $\M$ for his type $v_i$ and the random sample of the types of the others $\wb_{-i}$.

\vsedit{Consider an arbitrary action profile $s=(s_1,\ldots,s_n)$ for all players in all populations. In this context it is better to think of each $s_i$ as a $|\V_i|$ dimensional vector in $\A_i^{|\V_i|}$ and to view $s$ as a $\sum_i |\V_i|$ dimensional vector. Then with $s_{-v_i}$ we will denote all the components of this large vector except the ones corresponding to player $v_i\in\V_i$. Moreover, we will be denoting with $\vr$ a sample from $\F$ drawn by mechanism $\M^{\AG}$. We now argue about the expected utility of player $v_i$ from this deviation, which is:
\begin{align*}
\E_{\wb}\left[U_{i,v_i}^{\AG}(s_i^*(v_i,\wb_{-i}),s_{-{v_i}})\right]=~&\E_{\wb}\E_{\vr}\left[U_i(s_i^*(v_i,\wb_{-i}),s_{-i}(\vr_{-i});v_i)\cdot 1\{\vr_i = v_i\}\right]
\end{align*}
Summing the latter over all players $v_i\in \V_i$ in population $i$:
\begin{align*}
\sum_{v_i\in \V_i} \E_{\wb}\left[U_{i,v_i}^{\AG}(s_i^*(v_i,\wb_{-i}),s_{-{v_i}})\right]
=~&\textstyle{\E_{\wb,\vr}\left[\sum_{v_i\in \V_i}U_i(s_i^*(v_i,\wb_{-i}),s_{-i}(\vr_{-i});v_i)\cdot 1\{\vr_i = v_i\}\right]}\\
=~& \E_{\vr,\wb} \left[U_i(\sr_i^*(\vr_i,\wb_{-i}),s_{-i}(\vr_{-i}); \vr_i)\right]\\
=~& \E_{\vr,\wb} \left[U_i(\sr_i^*(\wb_i,\wb_{-i}),s_{-i}(\vr_{-i}); \wb_i)\right]\\
=~& \E_{\vr,\wb} \left[U_i(\sr_i^*(\wb),s_{-i}(\vr_{-i}); \wb_i)\right],
\end{align*}
where the second to last equation is an exchange of variable names and regrouping using independence. Summing over populations and using smoothness of $\M$, we get smoothness of $\M^{\AG}$:
\begin{align*}
\sum_{i\in [n]}\sum_{v_i\in \V_i} &\E_{\wb}\left[U_{i,v_i}^{\AG}(s_i^*(v_i,\wb_{-i}),s_{-{v_i}})\right] =~\textstyle{\E_{\vr,\wb} \left[\sum_{i\in [n]} U_i(\sr_i^*(\wb),s_{-i}(\vr_{-i}); \wb_i)\right]}\\
\geq~& \E_{\vr,\wb}\left[\lambda \opt(\wb) - \mu\rev(s(\vr))\right]
=~ \lambda \E_{\wb}\left[\opt(\wb)\right] - \mu \rev^{AG}(s)\end{align*}
}
\end{proof}

\begin{corollary}\label{cor:efficiency-of-bcce} Every $\BCCE$ of the incomplete information setting of a smooth mechanism $\M$, achieves expected welfare at least $\frac{\lambda}{\max\{1,\mu\}}$ of the expected optimal welfare.
\end{corollary}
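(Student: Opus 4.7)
The plan is to chain Theorem~\ref{thm:extension-theorem} with Theorem~\ref{thm:smooth=>PoA-CCE} via the agent-normal-form game $\M^{\AG}$. By Theorem~\ref{thm:extension-theorem}, $\M^{\AG}$ is $(\lambda,\mu)$-smooth, and since $\M^{\AG}$ is a complete information game, Theorem~\ref{thm:smooth=>PoA-CCE} guarantees that every $\CCE$ of $\M^{\AG}$ achieves at least a $\frac{\lambda}{\max\{1,\mu\}}$ fraction of the optimal welfare of $\M^{\AG}$. The remaining work is a dictionary translation between $\CCE$ of $\M^{\AG}$ and $\BCCE$ of $\M$.

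First I would verify that a distribution $D_s \in \Delta(\Sigma)$ is a $\BCCE$ of $\M$ (Definition~\ref{defn:bcce}) if and only if it is a $\CCE$ of $\M^{\AG}$. Writing out the $\CCE$ inequality for the agent-player $v_i$ against a fixed-action deviation $a_i' \in \A_i$ and unfolding Equation~\eqref{eqn:agent-utilities} yields
\begin{equation*}
\E_{\sr \sim D_s}\E_{\vr \sim \F}\left[U_i(\sr(\vr);\vr_i)\,1\{\vr_i = v_i\}\right] \geq \E_{\sr \sim D_s}\E_{\vr \sim \F}\left[U_i(a_i',\sr_{-i}(\vr_{-i});\vr_i)\,1\{\vr_i = v_i\}\right].
\end{equation*}
Dividing through by $\Pr[\vr_i = v_i] > 0$ (positive because $\V_i$ is finite and $\vr_i$ is uniform on it) gives exactly the conditional-expectation form in Definition~\ref{defn:bcce}. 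Restricting to fixed-action deviations $a_i' \in \A_i$ rather than full strategy deviations is harmless, since in $\M^{\AG}$ the Bayesian player $i$ is decomposed into independent agents $v_i \in \V_i$, each with action space $\A_i$; a deviation by a single agent $v_i$ corresponds precisely to changing the strategy $s_i \in \Sigma_i$ of the original player $i$ at the single type $v_i$.

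Next I would check that the welfare and optimum line up. Summing $U_{i,v_i}^{\AG}(s)$ over $v_i \in \V_i$ and $i \in [n]$ and adding in the analogously defined revenue of $\M^{\AG}$ collapses the indicator factors and gives $\E_{\vr}[SW(s(\vr);\vr)]$. Similarly, an outcome function of $\M^{\AG}$ is nothing but a mapping from a drawn profile $\vr$ to an outcome of $\M$, so $\opt^{\AG} = \E_{\vr}[\opt(\vr)]$. Substituting these identifications into the $\CCE$ efficiency bound for $\M^{\AG}$ delivers the claimed $\frac{\lambda}{\max\{1,\mu\}}$ bound on the welfare of any $\BCCE$ of $\M$ relative to the expected ex-post optimum.

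The only place requiring care — more a bookkeeping subtlety than a real obstacle — is the equivalence in the second step: confirming that the indicator-weighted $\CCE$ inequality for each agent $v_i$ is precisely the conditional $\BCCE$ inequality for type $v_i$, and that fixed-action deviations by agents are the right quantifier. Both hold because of the independence and the uniform marginals built into the population interpretation; once they are verified, the corollary follows immediately by composing Theorems~\ref{thm:extension-theorem} and~\ref{thm:smooth=>PoA-CCE}.
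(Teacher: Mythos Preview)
Your proposal is correct and follows exactly the approach sketched in the paper: apply Theorem~\ref{thm:extension-theorem} to obtain smoothness of $\M^{\AG}$, invoke Theorem~\ref{thm:smooth=>PoA-CCE} for the resulting complete-information game, and translate via the equivalence between $\BCCE$ of $\M$ and $\CCE$ of $\M^{\AG}$ together with $\opt^{\AG}=\E_{\vr}[\opt(\vr)]$. You have simply made explicit the bookkeeping (the indicator/conditioning identity and the welfare correspondence) that the paper leaves to the reader.
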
 
\section{Finite Time Analysis and Convergence Rates}

In the previous section we argued about the limit average efficiency of the game as time goes to infinity. In this section we analyze the convergence rate to $\BCCE$ and we show approximate efficiency results even for finite time, when players are allowed to have some $\epsilon$-regret. 

\begin{theorem}\label{thm:finite_conv}
Consider the repeated matching game with a $(\lambda,\mu)$-smooth mechanism. Suppose that for any $T\geq T^0$, each player in each of the $n$ populations has regret at most $\frac{\epsilon}{n}$. Then for every $\delta$ and $\rho$, there exists a $T^*(\delta,\rho)$, such that for any $T\geq \min\{T^0,T^*\}$, with probability $1-\rho$:
\begin{equation}
\textstyle{\frac{1}{T} \sum_{t=1}^{T} SW(s^t(v^t);v^t) \geq \frac{\lambda}{\max\{1,\mu\}}\E_{\vr}\left[\opt(\vr)\right] - \delta - \mu\cdot \epsilon}
\end{equation}
Moreover, $T^*(\delta,\rho) \leq \frac{54\cdot n^3\cdot |\Sigma| \cdot |\V|^2 \cdot H^3}{\delta^3} \log\left(\frac{2}{\rho}\right)$.
\end{theorem}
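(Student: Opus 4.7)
The plan combines the smoothness-based derivation of Theorem~\ref{thm:extension-theorem} with a concentration argument that passes from expectations over $\vr\sim\F$ to the realized sequence of samples $v^t$. I would execute it in three steps.

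\textbf{Step 1 (aggregate smoothness and regret, in expectation).} For each player $v_i\in\V_i$ in each population $i$, apply the no-regret hypothesis (regret $\leq\epsilon/n$) to the randomized deviation $a_i'\sim s_i^*(v_i,\wb_{-i})$ with $\wb\sim\F$ constructed in the proof of Theorem~\ref{thm:extension-theorem}. Sum these regret inequalities over all $v_i$ and all $i$. Using the identity $\sum_{v_i\in\V_i}U_{i,v_i}^{\AG}(s)=\E_\vr[U_i(s(\vr);\vr_i)]$ (the same algebra as in the proof of Theorem~\ref{thm:extension-theorem}), the aggregated left-hand side equals
$$\frac{1}{T}\sum_{t=1}^T\E_{\vr,\wb}\Big[\textstyle\sum_i U_i(s_i^*(\wb),s_{-i}^t(\vr_{-i});\wb_i)\Big],$$
which by $(\lambda,\mu)$-smoothness of $\M$ applied per round is at least $\lambda\E_\vr[\opt(\vr)]-\mu\cdot\frac{1}{T}\sum_t\E_\vr[\rev(s^t(\vr))]$. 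Rearranging via $\rev\leq SW$ exactly as in the proof of Theorem~\ref{thm:smooth=>PoA-CCE} yields
$$\frac{1}{T}\sum_{t=1}^T \E_\vr[SW(s^t(\vr);\vr)]\ \geq\ \frac{\lambda}{\max\{1,\mu\}}\E_\vr[\opt(\vr)]\ -\ \mu\epsilon.$$

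\textbf{Step 2 (concentration, from expectation to realized samples).} Each $s^t$ is measurable with respect to the filtration $\mathcal{F}_{t-1}=\sigma(v^1,\ldots,v^{t-1})$ (together with learners' internal coins), while $v^t\sim\F$ is drawn fresh and independent of $\mathcal{F}_{t-1}$. Consequently
$$Y_t:=\E_\vr[SW(s^t(\vr);\vr)]-SW(s^t(v^t);v^t)$$
is a bounded martingale difference sequence ($|Y_t|\leq H$), and $Z_t:=\opt(v^t)-\E_\vr[\opt(\vr)]$ is an i.i.d., centered, bounded sequence. Azuma--Hoeffding on $(Y_t)$ and Hoeffding on $(Z_t)$ each give error $O(H\sqrt{\log(1/\rho)/T})$ with probability at least $1-\rho/2$. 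Substituting these into the expected-welfare inequality from Step~1 and choosing $T$ so that the combined concentration error is at most $\delta$ yields the claimed lower bound on $\frac{1}{T}\sum_t SW(s^t(v^t);v^t)$.

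\textbf{Main obstacle.} Matching the precise quantitative bound $T^*(\delta,\rho)\leq 54 n^3|\Sigma||\V|^2 H^3\delta^{-3}\log(2/\rho)$ is the hard part. The clean martingale argument above gives $T^*=O(H^2\log(1/\rho)/\delta^2)$, which has a sharper $\delta$ dependence but lacks the combinatorial prefactors $|\Sigma|$, $|\V|^2$, and $n^3$. These prefactors indicate that the intended proof instead proceeds through a finite-time strengthening of Lemma~\ref{lem:convergence-to-product}: one controls, uniformly over all $(s,w)\in\Sigma\times\V$, the deviation between the strategy-conditional empirical frequency $|\T_{s,w}|/|\T_s|$ and $\F(w)$ by a union bound over $\Sigma\times\V$, while handling strategies $s$ with only $|\T_s|=O(T\delta)$ occurrences by charging their contribution to the error budget (which forces an additional $1/\delta$ factor when translating distribution-closeness to welfare-closeness, explaining the $1/\delta^3$ rate). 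Carefully tracking the $n$-dependence while aggregating across populations and payoff ranges is the technical bookkeeping challenge; conceptually, however, the proof is the direct finite-time shadow of the asymptotic convergence-to-$\BCCE$ argument composed with the smoothness guarantee.
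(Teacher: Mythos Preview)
Your overall plan---combine the smoothness deviation of Theorem~\ref{thm:extension-theorem} with Azuma--Hoeffding concentration to pass between realized and expected welfare---is sound and, once carried out correctly, yields a strictly better rate than the paper's. You have also correctly reverse-engineered the paper's actual argument: it does not use martingales at all, but conditions on each realized strategy profile $s\in\Sigma$, applies a Hoeffding bound to the conditional empirical value frequencies $p^T(v\mid s)=|\T_{s,v}|/|\T_s|$ for every $s$ with empirical mass at least some threshold $\zeta$, and absorbs the low-mass strategies into the error budget. Setting $\zeta\sim\delta/(|\Sigma|H)$ and the per-entry tolerance $\sim\delta/(|\V|H)$ is exactly what produces the $|\Sigma|\,|\V|^2\,\delta^{-3}$ factors you identified; the $n^3$ comes from distributing $\delta$ across the $n$ populations before cubing.

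There is, however, a real gap in your Step~1. The identity $\sum_{v_i}U_{i,v_i}^{\AG}(s)=\E_\vr[U_i(s(\vr);\vr_i)]$ involves the agent-normal-form utilities, which already carry an expectation over $\vr$; the regret hypothesis, by contrast, is a statement about the \emph{realized} sequence. Summing the per-player regret inequalities over $v_i\in\V_i$ gives a right-hand side of the form
\[
\frac{1}{T}\sum_t \E_\wb\bigl[U_i\bigl(s_i^*(v_i^t,\wb_{-i}),\,s_{-i}^t(v_{-i}^t);\,v_i^t\bigr)\bigr],
\]
with the realized $v^t$, not a fresh $\vr$. You therefore cannot yet perform the variable swap $(\vr_i,\wb_{-i})\mapsto(\wb_i,\wb_{-i})$ from Theorem~\ref{thm:extension-theorem}, and hence cannot apply $(\lambda,\mu)$-smoothness termwise. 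The fix is straightforward but must be made explicit: insert an additional Azuma step on the deviation side, and a matching one relating $\rev(s^t(v^t))$ to $\E_\vr[\rev(s^t(\vr))]$; both are bounded martingale-difference sums because $s^t$ is $\mathcal{F}_{t-1}$-measurable and $v^t$ is fresh. With two or three applications of Azuma rather than one, your argument goes through and is both cleaner and quantitatively stronger than the paper's union-bound route.
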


\bibliographystyle{plain}
\bibliography{thesis-bib}
\newpage

\begin{appendix}

\begin{center}
\bf \Large Supplementary material for \\ ``No-Regret Learning in Bayesian Games''
\end{center}
\setcounter{page}{1}


\section{Proof of Theorem \ref{thm:bce_poa_pota}}

For readability we repeat the definitions of Lemma \ref{lem:convergence-to-product} and Theorem \ref{thm:bce_poa_pota} from the main text. 

\begin{rtheorem}{Lemma}{\ref{lem:convergence-to-product}}
Let $D\in \Delta(\Sigma\times \V)$ be a joint distribution of (strategy, valuation) profile pairs. Consider a sequence of play of the random matching game, where each player uses a vanishing regret algorithm and let $D^T$ be the empirical distribution of strategy, valuation profile pairs up till time step $T$. Suppose that there exists a subsequence of $\{D^T\}_T$ that converges in distribution to $D$. Then, almost surely, $D$ is a product distribution, i.e. $D=D_s\times D_v$, with $D_s \in \Delta(\Sigma)$ and $D_v \times \Delta(\V)$ such that $D_v=\F$ and $D_s\in \BCCE$ of the static incomplete information game with distributional beliefs $\F$.
\end{rtheorem}

\begin{rtheorem}{Theorem}{\ref{thm:bce_poa_pota}}
The price of anarchy for Bayesian no-regret dynamics is upper bounded
by the price of anarchy of Bayesian coarse correlated equilibria.
\end{rtheorem}
\begin{proof}
Let $D\in \Delta(\Sigma\times \V)$ be a joint distribution, such that there is a subsequence of $\{D^T\}_T$, converging in distribution to $D$. Then by Lemma \ref{lem:convergence-to-product}, almost surely, $D$ is a product distribution, i.e. $D\in \Delta(\Sigma)\times \Delta(\V)$ and that the marginal on $\V$ is equal to $\F$ and the marginal on $\Sigma$ is a $\BCCE$ of the static incomplete information game with distributional beliefs $\F$.

Therefore, if $\rho$ is the $\BCCE-\poa$ of the mechanism, and if $(\sr,\vr)$ is a random sample from $D$, then almost surely:
\begin{equation}
\E_{\sr,\vr}\left[SW(\sr(\vr);\vr)\right] \geq \frac{1}{\rho}\E_{\vr}\left[\opt(\vr)\right]
\end{equation}
Thus the limit average social welfare of any convergent subsequence will be at least $\frac{1}{\rho}\E_{\vr}\left[\opt(\vr)\right]$, which then implies that almost surely:
\begin{equation*}
\lim\inf_{T\rightarrow \infty} \frac{1}{T} \sum_{t=1}^T SW(s^t(v^t); v^t)\geq \frac{1}{\rho}\E_{\vr}\left[\opt(\vr)\right] = \frac{1}{\rho}\lim_{T\rightarrow \infty} \frac{1}{T}\sum_{t=1}^{T} \opt(v^t)
\end{equation*}
Thus for any non-measure zero event, for any $\epsilon$, there exists a $f(\epsilon)$ such that for any $T\geq f(\epsilon)$:
\begin{equation*}
\frac{1}{T} \sum_{t=1}^T SW(s^t(v^t); v^t)\geq \frac{1}{\rho}\frac{1}{T}\sum_{t=1}^{T} \opt(v^t) - \epsilon
\end{equation*}
With no loss of generality we can assume that $\E_{\vr}\left[\opt(\vr)\right]>0$ (o.w. valuations are all zero and theorem holds trivially). Since, the average optimal welfare converges almost surely to $\E_{\vr}\left[\opt(\vr)\right]$, we get that for any non-measure zero event, there exists a $g(\delta)$ such that for
$T\geq g(\delta)$, $\frac{1}{T}\sum_{t=1}^{T} \opt(v^t)$ is bounded away from zero. Thereby, we can turn the additive error into a multiplicative one, i.e. for any non-measure zero event and for any $\epsilon'$ there exists $w(\epsilon')$ such that for any $T\geq w(\epsilon')$:
\begin{equation*}
\frac{1}{T} \sum_{t=1}^T SW(s^t(v^t);v^t)\geq \frac{1}{\rho}\left(1+\epsilon'\right)\frac{1}{T}\sum_{t=1}^{T} \opt(v^t) 
\end{equation*}
This implies that almost surely:
\begin{equation*}
\lim\sup_{T\rightarrow \infty} \frac{\frac{1}{T}\sum_{t=1}^{T} \opt(v^t)}{\frac{1}{T} \sum_{t=1}^T SW(s^t(v^t); v^t)}\leq \rho = \BCCE\text{-}\poa
\end{equation*}
\end{proof}

\section{Proof of Theorem \ref{thm:finite_conv}}

\begin{rtheorem}{Theorem}{\ref{thm:finite_conv}}
Consider the repeated matching game with a $(\lambda,\mu)$-smooth mechanism. Suppose that for any $T\geq T^0$, each player in each of the $n$ populations has regret at most $\frac{\epsilon}{n}$. Then for every $\delta$ and $\rho$, there exists a $T^*(\delta,\rho)$, such that for any $T\geq \min\{T^0,T^*\}$, with probability $1-\rho$:
\begin{equation}
\frac{1}{T} \sum_{t=1}^{T} SW(s^t(v^t);v^t) \geq \frac{\lambda}{\max\{1,\mu\}}\E_{\vr}\left[\opt(\vr)\right] - \delta - \mu\cdot \epsilon
\end{equation}
Moreover, $T^*(\delta,\rho) \leq \frac{54\cdot n^3\cdot |\Sigma| \cdot |\V|^2 \cdot H^3}{\delta^3} \log\left(\frac{2}{\rho}\right)$.
\end{rtheorem}
\begin{proof}
Fix a population $i$ and a Bayesian strategy $s_i^* \in \Sigma_i$, as well as a Bayesian strategy profile $s\in \Sigma$. For shorter notation we will denote: \[\pi_i(s_i^*,s,v) = U_i(s_i^*(v_i),s_{-i}(v_{-i}); v_i).\] For a time step $T$, let $p^T(s) = \frac{|\T_s|}{T}$ be the empirical distribution of a Bayesian strategy $s$ and with $p^T(v|s) = \frac{|\T_{s,v}|}{|\T_s|}$ be the empirical distribution of values conditional on a Bayesian strategy $s$. The average utility of a population $i$ up till time step $T$, when switching to a fixed Bayesian strategy $s_i^*$, can be written as:
\begin{equation}
\frac{1}{T} \sum_{t=1}^{T} \pi(s_i^*, s^t, v^t) = \sum_{s\in \Sigma} p^T(s) \sum_{v\in \V} p^T(v|s) \cdot\pi_i(s_i^*,s,v)
\end{equation}

We will show that for any $s_i^*$, there exists a $T^*(\delta,\rho)$ such that for any $T\geq T^*(\delta,\rho)$, with probability $1-\rho$:
\begin{equation}\label{eqn:closeness-of-utilities-with-empirical}
 \sum_{s\in \Sigma} p^T(s) \sum_{v\in \V} p^T(v|s) \cdot\pi_i(s_i^*,s,v) \geq \sum_{s\in \Sigma} p^T(s) \E_{\vr}\left[ \pi_i(s_i^*,s,\vr)\right] - \delta
\end{equation}
where $\vr$ is a random variable drawn from the distribution of valuation profiles $\F$. We will denote with $p(v)$ the density function implied by distribution $\F$.

In what follows we will denote with $H=\max_{i\in [n], v_i \in \V_i, x_i\in \X_i} v_i(x_i)$ the maximum possible value of any player. Thus observe that the utility of any player is upper bounded by $H$ and that the revenue collected by any player at equilibrium is upper bounded by $H$. 

For a time period $T$, let $G=\{s\in \Sigma: p^T(s) \geq \zeta\}$. Then observe that:
\begin{multline*}
 \sum_{s\in \Sigma} p^T(s) \sum_{v\in \V} \left(p^T(v|s)-p(v)\right) \cdot\pi_i(s_i^*,s,v)\geq\\ \sum_{s\in G} p^T(s) \sum_{v\in V} \left(p^T(v|s)-p(v)\right) \cdot\pi_i(s_i^*,s,v) - \zeta \cdot |\Sigma|\cdot H
\end{multline*}
Observe that for any $s\in G$, $|\T_s|\geq \zeta \cdot T$. Thus $p^T(v|s)$ is the empirical mean of at least $\zeta\cdot T$ independent random samples of a Bernoulli trial with success probability $p(v)$. Hence, by Hoeffding bounds, we have that $|p^T(v|s)-p(v)|\leq t$ with probability at least $1-2\exp\left(-2\cdot \zeta\cdot T\cdot t^2\right)$. Thus with that much probability we get:
\begin{align*}
 \sum_{s\in \Sigma} p^T(s) \sum_{v\in \V} \left(p^T(v|s)-p(v)\right) \cdot\pi_i(s_i^*,s,v)\geq -t\cdot |\V|\cdot H - \zeta \cdot |\Sigma|\cdot H
\end{align*}
By setting $t=\frac{\delta}{2\cdot |\V|\cdot H}$, $\zeta = \frac{\delta}{2\cdot |\Sigma| \cdot H}$ and $T^*(\delta,\rho)= \frac{16 \cdot |\Sigma| \cdot |\V|^2 \cdot H^3}{\delta^3} \log\left(\frac{2}{\rho}\right)$, we get the claimed property in Equation \eqref{eqn:closeness-of-utilities-with-empirical}.

Now suppose that after time step $T^0$ each player in a population has regret $\epsilon/n$. Thus the average utility of the population is at least the utility from switching to any fixed Bayesian strategy $s_i^*$, minus an error term of $\epsilon/n$:
\begin{equation}
\sum_{s\in \Sigma} p^T(s) \sum_{v\in \V} p^T(v|s) \pi_i(s_i,s,v)\geq \sum_{s\in \Sigma} p^T(s) \sum_{v\in \V} p^T(v|s) \pi_i(s_i^*,s,v)- \frac{\epsilon}{n}
\end{equation}
From the previous analysis, for any $T\geq \min\{T^0,T^*(\frac{2\delta}{3\cdot n},\rho)\}$, we get that with probability $1-\rho$:
\begin{equation}
\sum_{s\in \Sigma} p^T(s) \sum_{v\in \V} p^T(v|s) \pi_i(s_i,s,v)\geq \sum_{s\in \Sigma} p^T(s) \E_{\vr}\left[\pi_i(s_i^*,s,\vr)\right]- \frac{2\delta}{3n}- \frac{\epsilon}{n}
\end{equation}
Summing over all populations and using the Bayesian smoothness property of the mechanism from Theorem \ref{thm:extension-theorem}, we have that with probability $1-\rho$:
\begin{align*}
\sum_{s\in \Sigma} p^T(s) \sum_{v\in \V} p^T(v|s) \sum_i \pi_i(s_i,s,v)\geq~& \sum_{s\in \Sigma} p^T(s) \left(\lambda \E_{\vr}\left[\opt(\vr)\right] - \mu \rev^{\ex}(s)\right)- \frac{2\delta}{3}- \epsilon\\
\geq~& \lambda \E_{\vr}\left[\opt(\vr)\right] - \mu \sum_{s\in \Sigma} p^T(s) \rev^{\AG}(s) - \frac{2\delta}{3} - \epsilon
\end{align*}

To conclude the theorem we observe that since for any $s\in \Sigma$, $|p^T(v|s)-p(v)|\leq \frac{\delta}{3 \cdot n\cdot |\V|\cdot H}$, we get that:
\begin{equation}
\rev^{\ex}(s) = \sum_{v\in \V} p(v) \rev(s(v)) \leq \sum_{v\in \V} p^T(v|s) \rev(s(v)) + \frac{\delta}{3}
\end{equation}
Since, the revenue collected by a player at any action in the support of an equilibrium is at most $H$.  By the latter we can combine the revenue on the right hand side with the utility on the left hand side. We can also bound the remaining $(\mu-1)$ of the revenue, by $(\mu-1)$ of the average welfare minus $\epsilon$, since each player in each population can always drop out of the auction and therefore his average utility at an $\frac{\epsilon}{n}$-regret sequence must be at least $-\frac{\epsilon}{n}$. 

Hence, we get that:
\begin{equation}
\sum_{s\in \Sigma} p^T(s) \sum_{v\in \V} p^T(v|s) SW(s(v);v) \geq \frac{\lambda}{\max\{1,\mu\}} \E_{\vr}\left[\opt(\vr)\right] - \delta - \mu \cdot \epsilon
\end{equation}
Thus choosing $T^*(\rho,\frac{2\delta}{3\cdot n}) = \frac{54\cdot n^3\cdot |\Sigma| \cdot |\V|^2 \cdot H^3}{\delta^3} \log\left(\frac{2}{\rho}\right)$, we get the conditions of the theorem.
\end{proof}
%

\end{appendix}
\end{document}